\documentclass[review,hidelinks,onefignum,onetabnum]{siamart251216}

\usepackage{lipsum}
\usepackage{amsfonts}
\usepackage{graphicx}
\usepackage{epstopdf}
\usepackage{tikz}
\usepackage{algorithmic}
\ifpdf
  \DeclareGraphicsExtensions{.eps,.pdf,.png,.jpg}
\else
  \DeclareGraphicsExtensions{.eps}
\fi

\newsiamremark{remark}{Remark}
\newsiamremark{exmp}{Example}
\newsiamremark{hypothesis}{Hypothesis}
\crefname{hypothesis}{Hypothesis}{Hypotheses}
\newsiamthm{claim}{Claim}
\newsiamremark{fact}{Fact}
\crefname{fact}{Fact}{Facts}

\headers{Computation of entanglement by a CBO method}{M. Herty, 
Y. Tang, and 
Y. Zhou}

\title{Computation of entanglement for quantum states by a Consensus-Based Optimization method\thanks{ 
\funding{This work was funded by the Deutsche Forschungsgemeinschaft (DFG, German Research Foundation) through 
HE5386/33-1 Control of Interacting Particle Systems, and Their Mean-Field, and Fluid-Dynamic Limits (560288187) and through 
HE5386/34-1 Partikelmethoden für unendlich dimensionale Optimierung (561130572) and through  the Priority Programme SPP2298 Theoretical Foundations of Deep Learning.}}}

\author{Michael Herty \thanks{Chair in Numerical Analysis, IGPM, RWTH Aachen University, Im Süsterfeld 2, D-52072 Aachen, Germany (herty@igpm.rwth-aachen.de); Extraordinary Professor, Department of Mathematics and Applied Mathematics, University of Pretoria, Private Bag X20, Hatfield 0028, South Africa}
\and Yijia Tang \thanks{Corresponding author, Scientific Computing Center, Karlsruhe Institute of Technology, 76131 Karlsruhe, Germany (yijia.tang@kit.edu)}
\and Yizhou Zhou \thanks{IGPM, RWTH Aachen University, Im Süsterfeld 2, D-52072 Aachen, Germany (zhou@igpm.rwth-aachen.de)}}

\usepackage{amsopn}

\ifpdf
\hypersetup{
  pdftitle={Computation of entanglement for quantum states by a Consensus-Based Optimization method},
  pdfauthor={Michael Herty, 
Yijia Tang, and 
Yizhou Zhou}
}
\fi

\begin{document}

\maketitle

\begin{abstract}
The computation of quantum entanglement can be formulated as a high-dimensional nonconvex optimization problem with orthogonality constraints. In this work, we propose structure-preserving consensus-based optimization (CBO) methods for entanglement computation, with one approach based on a Hermitian formulation and the other evolving directly on the unitary manifold. To handle the variable dimension of the feasible set, we introduce a cross-dimensional interaction mechanism allowing exchange of information between particles of different sizes. Numerical experiments demonstrate that the proposed methods achieve accurate approximations.
\end{abstract}

\begin{keywords}
quantum entanglement, Consensus-Based Optimization, structure-preserving
\end{keywords}

\begin{MSCcodes}
81P40, 65K10
\end{MSCcodes}

\section{Introduction}

Quantum information theory plays a central role in modern physics, quantum computing, and information science \cite{MR1796805}. A key challenge in this field is the characterization and computation of quantum entanglement \cite{BennettDiVincenzoSmolinWootters96,MR2515619,GUHNE20091,PhysRevA.83.062325}, which serves as a fundamental resource for quantum teleportation, cryptography and computation. However, many entanglement-related problems are still open and known to be computationally intractable in general \cite{PhysRevLett.95.210501,10.5555/2011350.2011361,MR2087945,TerhalVollbrecht00}. This intrinsic computational complexity makes it essential to develop efficient optimization methods capable of handling high-dimensional quantum systems.

To address these challenges, a variety of numerical and analytical approaches have been proposed in the literature. Classical methods include gradient-based optimization techniques \cite{PhysRevA.80.042301,RyuCaiCaro08}, variational approaches \cite{PhysRevA.64.052304}, Semidefinite Programming \cite{Peres96,Rains,DPS}, machine learning-based method \cite{958b-87zr}, and simulated annealing \cite{Zyczkowski99} have been employed to explore the highly nonconvex landscape associated with entanglement optimization. Despite these advances, accurately computing entanglement in large-scale quantum systems remains a difficult task, and existing methods often suffer from either high computational cost or limited accuracy. 

In this work, we focus on the computation of entanglement using a Consensus-Based Optimization (CBO) method \cite{MR3597012}. CBO is a recently developed stochastic particle-based optimization framework designed for nonconvex high-dimensional problems. The main idea is to evolve a system of interacting particles toward a consensus point that approximates a local minimum of the system. Due to its simplicity, parallel structure, and robustness, CBO has been successfully applied to a wide range of optimization problems, including machine learning \cite{MR4621974,CarrilloJinLiZhu21,6483403}, constrained optimization \cite{MR4337731,MR4538901}, stochastic optimization \cite{MR4990453}, multi-objective optimization \cite{MR4629470}, qubit configuration optimization \cite{de2025consensus} and so on. Moreover, a rigorous convergence analysis of CBO in the many-particle limit has been developed in recent years \cite{MR3804923,MR4793478,MR4179193,MR4215338,MR4404221,MR4947952,huang2026faithfulglobalconvergencerescaled,MR4456068}. Motivated by these advantages, we adapt the CBO framework to the problem of entanglement computation. A naive application of CBO to this problem is however infeasible due to the distinct features of entanglement computations.

Compared to other nonconvex optimization problems, the computation of entanglement exhibits a distinctive feature: the feasible set is given by complex finite dimensional matrices lying on a nonlinear complex Stiefel manifold,
$$
V_r(\mathbb{C}^M)=\{U\in \mathbb{C}^{M\times r}:U^\dagger U = I_r\}.
$$ 
Due to this geometric constraint, it is both important and natural to design a CBO method that preserves the orthogonality structure throughout the evolution. So far, classical CBO methods have mainly been applied in $\mathbb{R}^d$ and typically do not account for the underlying manifold. At this point, we comment on the works \cite{MR4404221,HaIEEE} about the CBO method on the real Stiefel manifold with rigorous mean-field analysis. In contrast to their predictor–corrector scheme and Cayley transformation scheme, our methods are based on two different frameworks: (1) We first present a Hermitian-preserving CBO system that directly evolves Hermitian matrices. Then we construct unitary matrices via the exponential map. (2) Secondly, we compare this to a novel method where the CBO system preserves unitary matrices. To this end, the drift and diffusion needs to be tailored to the given setup.
At the discrete level, the unitary property strictly preserves by resorting to an exponential integrator. 

We provide a brief discussion of the two methods.
Our Hermitian-preserving CBO method can be reformulated within the vector-based framework in \cite{CarrilloJinLiZhu21}, thereby inheriting its solid theoretical foundation, including convergence at the mean-field level.
The key distinction is that we explicitly design the matrix structure to preserve the Hermitian property. Notably, this approach is restricted to $M\times M$ square matrices, so a truncation is required to extract the relevant $M\times r$ component. In contrast, the unitary-preserving method evolves the $M\times r$ matrix directly without truncation. The numerical experiments in Section \ref{sec:5} demonstrate the performance of both methods in several benchmark problems.

Furthermore, we propose a new multi-species CBO method that enables information exchange among different species. This idea is novel compared to classical CBO methods \cite{CarrilloJinLiZhu21,MR4404221,MR3597012} and utilizes the intrinsic structure of entanglement computation. In our setting, different species correspond to different matrix dimensions $M$ in the feasible set of the optimization problem.
In practice, problems associated with different values of $M$ are typically solved independently. We instead introduce a mechanism that allows communication across different values of $M$. The key idea is to compute the consensus using information from particles with different dimensions. This cross-dimensional interaction is realized through a embedding and truncation procedure, which makes it possible to combine matrices of different dimensions $M$ in a consistent manner and exploit this additional information in e.g. computation of the consensus-point. As a result, the resulting coupled dynamics enhance the overall exploration capability of the method and provide a more flexible framework for entanglement optimization.

The paper is organized as follows. In Section~\ref{sec:2}, we briefly review the basic concepts of quantum entanglement and the standard Consensus-Based Optimization method. Section~\ref{sec:3} is devoted to two structure-preserving CBO approaches. In Section~\ref{sec:4}, we introduce a multi-species CBO method for information exchange among particles with different dimensions. Finally, numerical experiments are presented in Section~\ref{sec:5}.

\section{Preliminaries}\label{sec:2}
In this section, we first introduce the background of entanglement for quantum states and, in particular, present the optimization problem for computing the entanglement of formation. In the second subsection, we introduce the typical Consensus-Based Optimization method.

\subsection{Entanglement for quantum states}
In this work, we consider a bipartite mixed quantum state consists of two subsystems of dimension $N_A, N_B$. The density matrix $\rho_{AB}$ is a $N\times N$ positive semi-definite Hermitian matrix with trace one. Here $N=N_AN_B$. 

The density matrix of a mixed quantum state can be expressed in terms of pure-state decomposition 
\begin{equation}\label{rhoAB}
\rho_{AB} = \sum_{k=1}^r p_k |\psi_k\rangle \langle \psi_k|,\qquad \sum_{k=1}^r p_k = 1,~~p_k \ge 0,
\end{equation}
where $|\psi_k\rangle$ represents a unit column vector in $\mathbb{C}^N$, $\langle \psi_k|$ is the conjugate transpose of $|\psi_k\rangle$.
Physically, the quantum system is in the pure state $|\psi_k\rangle$ with probability $p_k$.
Moreover, the matrix expression for \eqref{rhoAB} reads
$$
\rho_{AB}= \Psi P \Psi^{\dagger}
$$
Here $\dagger$ means the conjugate transpose, $P=\text{diag}(p_1,\dots,p_r)$ and the column vectors of $\Psi$ are $|\psi_k\rangle$ with $1\leq k\leq r$.



Clearly, the spectral decomposition of $\rho_{AB}$ provides one possible pure-state decomposition. Namely, we take $P$ and $\Psi$ to be the eigenvalues and eigenvectors of $\rho_{AB}$, and let $r$ denote its rank. 

However, the pure-state decomposition is not unique. For $M\geq r$, let $U\in\mathbb{C}^{M\times r}$ be a matrix satisfying 
$U^{\dagger}U=I_r$. Then we have 
$$
\rho_{AB}= \Psi P^{1/2} U^{\dagger} U P^{1/2} \Psi^{\dagger} = WW^{\dagger}
$$
with 
\begin{equation}\label{W-U}
    W=\Psi P^{1/2} U^{\dagger}.
\end{equation} 
Denoting $|w_m\rangle$ the $m$-th column of $W$ for $1\leq m\leq M$, we have 
\begin{equation}\label{W-phi}
\rho_{AB} = \sum_{m=1}^M |w_m\rangle \langle w_m|  = \sum_{m=1}^M \sigma_m |\phi_m\rangle \langle \phi_m|
\end{equation}
with $\sigma_m=\langle w_m | w_m\rangle$ and $|\phi_m\rangle=| w_m\rangle / \sqrt{\sigma_m}$. The formula \eqref{W-phi} provides a further pure-state decomposition. 
Through this expression, the freedom of the decomposition can be characterized in terms of the matrix $U$.

For each $U$, we compute the entanglement $E_{AB}(U)$ by the following steps: \cite{BennettDiVincenzoSmolinWootters96,Wootters98}\\
\begin{itemize}
    \item[(E1)] Compute $W$ according to \eqref{W-U}.
    \item[(E2)] Compute $\phi_m$ and $\sigma_m$ for $1\leq m\leq M$ by \eqref{W-phi}.
    \item[(E3)] Denote $\rho^m=|\phi_m\rangle\langle \phi_m|$. And compute the reduced state by taking the partial trace $\rho^m_A=\text{Tr}_B \rho^m$.
    \item[(E4)] Compute the entanglement of pure state by the von Neumann entropy $$E(|\phi_m\rangle)=-\text{Tr} \left(\rho^m_A\log\rho^m_A\right).$$
    \item[(E5)] Compute $E_{AB}(U)$ by:
    $$
    E_{AB}(U) = \sum_{m=1}^M\sigma_m E(|\phi_m\rangle).
    $$
\end{itemize}
Since there are infinitely many $U$, the Entanglement of Formation (EoF) \cite{BennettDiVincenzoSmolinWootters96} of $\rho_{AB}$ is defined as the minimal amount of entanglement required to create $\rho_{AB}$, that is,
\begin{equation}
\label{eq: EoF}
\mathrm{EoF}=\min_{U^{\dagger}U=I_r}E_{AB}(U).
\end{equation}
Note that $E_{AB}(U)$ is nonnegative for any $U$ due to the nonnegativity of the von Neumann entropy, and the same holds for EoF.

The EoF is a entanglement measure and is closely related to the question of whether a mixed bipartite state $\rho_{AB}$ is \emph{separable} or \emph{entangled}.
The mathematical formulation is given as follows: 
\begin{itemize}
    \item \emph{Separable (not entangled)} if there exists an finite integer $K\in\mathbb{N}^+$ and $K$ constants $\gamma_k>0$ for $1\leq k\leq K$, such that
    \[
    \rho_{AB} = \sum_{k=1}^K \gamma_k \, \rho_A^{(k)} \otimes \rho_B^{(k)}, \qquad \sum_{k=1}^K \gamma_k = 1.
    \]
    Here $\rho_A^{(k)}\in \mathbb{C}^{N_A\times N_A}$ and $\rho_B^{(k)}\in \mathbb{C}^{N_B\times N_B}$ are valid density matrices on subsystem $A$ and $B$, respectively.
    \item \emph{Entangled} if it is not \emph{Separable}.\\
\end{itemize}

Note that, a quantum state $\rho_{AB}$ is separable, if and only if there exists an finite integer $M\in\mathbb{N}^+$ and $M$ constants $\sigma_m>0$ for $1\leq m\leq M$, such that
\begin{equation*} 
    \rho_{AB} = \sum_{m=1}^M \sigma_m \, |\phi_m\rangle \langle \phi_m|,\qquad |\phi_m\rangle = |\alpha_m\rangle \otimes |\beta_m\rangle.
\end{equation*}
Here $\sum_i \sigma_i = 1$, states $|\alpha_m\rangle\in \mathbb{C}^{N_A}$ and $|\beta_m\rangle\in \mathbb{C}^{N_B}$. Due to the property of $|\phi_m\rangle$, we have $\rho_A^m = |\alpha_m\rangle \langle \alpha_m|$ when computing the entanglement (see Step (E3)).
Thus we have $E(|\phi_m\rangle)= 0$ for every $m$. Clearly, it leads to
$E_{AB}(U)=0$. Therefore, to determine whether a quantum state $\rho_{AB}$ is separable, we need to solve the optimization problem \eqref{eq: EoF} and check whether the EoF attains zero.

\subsection{Consensus-Based Optimization (CBO)}
Consensus-Based Optimization (CBO) \cite{MR3597012} is a particle-based optimization method designed to approximate the global minimizer of an objective function, i.e.
$$
\min_{x\in\mathbb{R}^d}f(x).
$$
The CBO method does not rely on gradient information and it uses a system of interacting particles
\[
dX_t^j = -\lambda \bigl(X_t^j - x_\beta(t)\bigr)\,dt
+ \sigma \lvert X_t^j - x_\beta(t) \rvert \, dW_t^j,
\quad j = 1, \dots, J,
\]
where \(\lambda > 0\) is the drift rate, \(\sigma > 0\) is the noise intensity, and \((W_t^j)_{j=1}^J\) are independent $d$-dimensional standard Brownian motions. 

Each particle represents a candidate solution, and its movement is influenced by both deterministic attraction toward a collective consensus and stochastic fluctuations that encourage exploration. The particles system is connected to the optimization by the \emph{consensus point}, defined as the weighted average
\[
x_\beta(t) =
\frac{\sum_{j=1}^{J} X_t^j \, \exp(-\beta f(X_t^j))}
{\sum_{j=1}^{J} \exp(-\beta f(X_t^j))},
\]
where \(\beta \gg 1 \) is a parameter. In this way, particles located in regions where \(f\) is small contribute more significantly to the average. This Gibbs distribution type weight associated with the objective function is justified by the Laplace principle \cite{DemboZeitouni}.

The original CBO system uses isotropic diffusion, where all dimensions are explored equally. In \cite{CarrilloJinLiZhu21}, an anisotropic diffusion is introduced, namely, the component-wise geometric Brownian motion is used so that each dimension is explored with a different magnitude. This approach is better suited for high-dimensional optimization problems, as the constraint between $\lambda$ and $\sigma$ for forming a consensus among particles is insensitive to the dimension $d$.

Under suitable conditions, it can be shown that the particles tend to collapse toward a common point that approximates the global minimizer of $f$. This collective behavior can be analyzed in a mean-field limit as $J\rightarrow \infty$, where the particle system is described by a kinetic or Fokker–Planck type equation governing the evolution of the particle density \cite{MR3804923,MR4793478,MR4179193,MR4404221,MR4456068,MR4947952,huang2026faithfulglobalconvergencerescaled,MR4215338}. These theoretical properties, together with the simplicity of the algorithm and its gradient-free nature, make CBO an attractive approach for high-dimensional optimization problems.

\section{Structure-preserving CBO method for the entanglement}\label{sec:3}

In this section, we modify the CBO on the constrained  optimization problem \eqref{eq: EoF}. In our setting, the optimization variable is a complex matrix 
$U\in\mathbb{C}^{M\times r}$ subject to the orthogonal constraint. The feasible set is no longer a linear space but a nonlinear Stiefel manifold 
$$
V_r(\mathbb{C}^M)=\{U\in \mathbb{C}^{M\times r}:U^\dagger U = I_r\}.
$$ 
To incorporate this structure within the CBO framework, one should evolve the particle system by enforcing the constraint at each iteration. To this end, two different ways to ensure that the CBO dynamics remain in $V_r(\mathbb{C}^M)$ are possible.

\subsection{Hermitian-preserving CBO method}\label{sec3.1}
Given an $M \times M$ Hermitian matrix $H$, define
$$
\widetilde{U} = \exp(iH).
$$
It is straightforward to verify that $\widetilde{U}$ is unitary. Then, $U \in \mathbb{C}^{M \times r}$ is constructed by taking the first $r$ columns of $\widetilde{U}$. By construction, $U$ lies on the Stiefel manifold $V_r(\mathbb{C}^M)$.

Thanks to this observation, the original task reduces to ensuring the Hermitian property of $H$ in the implementation of the CBO algorithm. To this end, we evolve $J$ particles $H^j$, $1 \leq j \leq J$, where each $H^j$ is an $M\times M$ Hermitian matrix. The $j$-th particle $H^j$ satisfies the CBO equation:
\begin{equation}
\label{eq: CBO_H1}
\left\{
\begin{aligned}
	dH^j(t)=& -\lambda(H^j-\bar{H})dt+\sigma(H^j - \bar{H}) \odot dW,  \\[2mm]
    H^j(0)=&~H^{j,0}.
\end{aligned}
\right. 
\end{equation}
Here $\odot$ represents the Hadamard-product, so anisotropic diffusion is utilized. $\bar{H}$ is the consensus computed by
\begin{equation}\label{consensus-H}
\bar{H} = \frac{1}{\sum_{j=1}^J\exp(-\beta E_{AB}(U^j))}\sum_{j=1}^JH^j\exp(-\beta E_{AB}(U^j))
\end{equation}
with $\beta>0$ a sufficiently large constant and $U^j$ being the first $r$-columns of $\widetilde{U}^j = \exp(iH^j)$. The drift rate $\lambda$ and the noise parameter $\sigma$ are  constants. Moreover, $W$ is an $M \times M$ matrix-valued stochastic process representing the perturbation. 

It is easy to see that $\bar{H}$ is always Hermitian. 
The key to preserving the Hermitian property of $H^j$ during the evolution is to ensure that the noise $dW$ is also Hermitian. To this end, we take $\{W_{m_1,m_2}\}_{1\leq m_1\leq m_2 \leq M}$ as $M(M+1)/2$ independent complex random processes, whose real and imaginary parts are standard Brownian motion.
Here $W_{m_1,m_2}$ is the $(m_1,m_2)$-th entry of $W$.
For $m_1 > m_2$, we define $W_{m_1,m_2}$ as the complex conjugate of $W_{m_2,m_1}$, ensuring that $W$ is  Hermitian. 

Now we present the first algorithm to implement the Hermitian-preserving CBO, see Algorithm 3.1.

\begin{remark}
The equation~\eqref{eq: CBO_H1} is written in matrix form. Through a one-to-one mapping, it can equivalently be expressed in vector form. Since $H^j$ is a complex Hermitian matrix, the unknowns in~\eqref{eq: CBO_H1} consist of the real parts $\mathrm{Re}(H^j_{m_1,m_2})$ for $m_2 \geq m_1$ and the imaginary parts $\mathrm{Im}(H^j_{m_1,m_2})$ for $m_2 > m_1$. 
By our construction,~\eqref{eq: CBO_H1} is therefore equivalent to evolving these $M^2$ independent variables, from which $H^j$ can be reconstructed using the Hermitian property. This vector formulation enables a direct comparison with the formulation in \cite{CarrilloJinLiZhu21}, which uses component-wise Brownian motion.  
Particularly, the theoretical results in \cite{CarrilloJinLiZhu21} guarantee the convergence of the CBO method in the  mean-field limit level with parameter constraints independent of the dimensionality.
\end{remark}



\begin{remark}
We compute the entanglement for each value of $M$. According to the theoretical result \cite{Horodecki97}, the lower and upper bounds are $r$, the rank of $\rho_{AB}$, and $N^2$, the square of the matrix dimension, respectively.
\end{remark}

\begin{remark}\label{remark-projH}
Another natural method to guarantee the Hermitian property under the framework of CBO is to exploit projection at each step. Namely, we consider the scheme \begin{equation}
\label{eq: CBO-H-p}
\left\{
\begin{aligned}
H^{j,k+1/2} =&~ H^{j,k} - \lambda\Delta t (H^{j,k}-\bar{H}^k) + \sigma\sqrt{\Delta t}(H^{j,k}-\bar{H}^k)\odot Z^k
,\\[2mm]
H^{j,k+1}=&~\frac{H^{j,k+1/2}+({H^{j,k+1/2}})^\dagger}{2}.
\end{aligned}
\right.
\end{equation}
Here the $(m_1,m_2)$-th entry of $Z^k$ is given by $Z^k_{m_1,m_2}\sim \mathcal{N}(0,1)$.
The second step is used to impose the Hermitian constraint.
\end{remark}

\subsection{Unitary-preserving CBO method}

Previously, we solved an optimization problem for a full $M\times M$ matrix, although only the first $r$ columns (with $r\leq M$) are required. Moreover, the approach involved first computing a Hermitian matrix and then constructing a unitary matrix. In contrast, the new method directly computes the desired $M\times r$ matrix whose columns are orthogonal.

The key issue is to ensure that the matrix $U\in \mathbb{C}^{M\times r}$ lies in the Stiefel manifold, i.e., it satisfies the orthonormality constraint $U^\dagger U=I_r$. This structural property should be preserved throughout the computation. To this end, we consider the stochastic differential equation (SDE)
\begin{equation}\label{CBO-unitary}
\left\{
\begin{aligned}
	dU^j(t)=& ~ \lambda(\bar{U}(U^j)^\dagger - U^j \bar{U}^\dagger)U^j ~dt+\sigma dW \circ U^j, \\[2mm]
    U^j(0)=&~U^{j,0}.
\end{aligned}
\right.
\end{equation}
Notice that the SDE should be interpreted in the Stratonovich sense.
Here $dW$ is designed to be a skew-Hermitian matrix (details are given in \eqref{skew-symm-dW}). The consensus $\bar{U}$ is given by
$$
\bar{U}=\text{argmin}_jE_{AB}(U^j).
$$
Clearly, $\bar{U}$ also satisfies 
$\bar{U}^\dagger\bar{U}=I_r$.
The initial data $U^{j,0}\in \mathbb{C}^{M\times r}$ consists of $r$ orthogonal $\mathbb{C}^m$ vectors, $\lambda$ and $\sigma$ are scalars. 

\begin{remark}
To ensure that $\bar{U}$ remains on the Stiefel manifold, the standard CBO averaging
$$
\bar{U} = \frac{\sum_{j=1}^J \omega_j U^j}{\sum_{j=1}^J \omega_j}, 
\qquad \omega_j = \exp(-\beta E_{AB}(U^{j})),
$$
cannot be used, as it does not preserve the orthogonality constraint.

Alternatively, for the unitary matrix $U^j\in\mathbb{C}^{M\times M}$, we could instead consider the weighted geometric mean
$$
\bar{U} = \prod_{j=1}^J (U^j)^{\eta_j}
= \prod_{j=1}^J \exp\bigl(\eta_j \log U^j\bigr),
\qquad 
\eta_j = \frac{\exp(-\beta E_{AB}(U^{j}))}{\sum_{j=1}^J \exp(-\beta E_{AB}(U^{j}))},
$$
which preserves the unitary property. 
\end{remark}

We first show that the dynamic \eqref{CBO-unitary} preserves the unitary property.
\begin{proposition}
Assume that the initial data $U^{j,0}$ satisfies $(U^{j,0})^\dagger U^{j,0}=I_r$. Then for any $t>0$, the matrix $U^j=U^j(t)$ satisfies 
$$
(U^j)^\dagger U^j = I_r.
$$
\end{proposition}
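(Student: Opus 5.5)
The plan is to compute $d\bigl((U^j)^\dagger U^j\bigr)$ and show it vanishes identically. Because the SDE \eqref{CBO-unitary} is understood in the Stratonovich sense, the ordinary product rule holds, with no Itô correction term. This is the crucial point. Writing $U=U^j$ and $\bar U$ for the consensus, we get
\begin{equation*}
d(U^\dagger U) = (dU)^\dagger U + U^\dagger dU .
\end{equation*}
The whole argument then reduces to observing that both the drift and the noise have the form $A\,U$ with $A$ skew-Hermitian.

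\textbf{Drift.} Set $A = \lambda(\bar U U^\dagger - U\bar U^\dagger)$. Then
\begin{equation*}
A^\dagger = \lambda(U\bar U^\dagger - \bar U U^\dagger) = -A,
\end{equation*}
so $A$ is skew-Hermitian. The drift contributes
\begin{equation*}
U^\dagger A^\dagger U + U^\dagger A U = U^\dagger(A^\dagger + A)U = 0 .
\end{equation*}
This holds for whatever $\bar U$ is selected at each time, since the argmin is just some fixed $M\times r$ matrix between selection changes. Skew-Hermiticity of $A$ never uses $\bar U^\dagger \bar U = I_r$.

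\textbf{Noise.} By construction \eqref{skew-symm-dW}, $dW$ is skew-Hermitian. So $(dW)^\dagger = -dW$, and the noise contribution is
\begin{equation*}
\sigma\, U^\dagger\bigl((dW)^\dagger + dW\bigr)U = 0 .
\end{equation*}
Hence $d(U^\dagger U)=0$. Integrating from $0$ to $t$ and using $(U^{j,0})^\dagger U^{j,0}=I_r$ gives $(U^j)^\dagger U^j = I_r$ for all $t>0$.

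\textbf{Main obstacle.} The delicate step is justifying the Stratonovich product rule rigorously. I would handle it in one of two ways.

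\emph{Direct route.} Invoke the standard fact that Stratonovich calculus obeys the classical chain rule for $C^2$ functions, applied to $F(U)=U^\dagger U$, viewed as a real quadratic map on $\mathbb{R}^{2Mr}$.

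\emph{Cross-check via Itô.} Convert to Itô form. The correction adds the drift $\tfrac{\sigma^2}{2}\,\mathbb{E}$-type term $C\,U$, where $C$ is the quadratic-variation matrix of $dW\,dW$. In Itô form, $d(U^\dagger U)$ then picks up $U^\dagger\bigl(C + C^\dagger + dW^\dagger dW\bigr)U$. One checks that $C=\tfrac12 dW\,dW$ and that $dW^\dagger dW = -dW\,dW$, so these terms cancel.

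\textbf{Switching of $\bar U$.} A further subtlety is that $\bar U$ is a discontinuous (argmin) function of the particle configuration, so well-posedness is not automatic. I would treat the argument as valid on any interval where a strong solution exists. On such an interval, $\bar U$ is an adapted, bounded coefficient multiplying $U$ linearly. The conservation identity is pathwise and local in time, so piecing intervals together between switches preserves it.
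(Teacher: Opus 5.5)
Your argument is correct and is essentially the paper's proof: the paper bundles drift and noise into the single skew-Hermitian increment $Q=\lambda(\bar U (U^j)^\dagger - U^j\bar U^\dagger)\,dt+\sigma\,dW$ and applies the Stratonovich product rule to get $d\bigl((U^j)^\dagger U^j\bigr)=(U^j)^\dagger(Q^\dagger+Q)U^j=0$, which is your term-by-term computation. Your It\^o cross-check and your remarks on well-posedness under switching of the argmin go beyond what the paper addresses; one small correction is that between switches $\bar U=U^{j^*}(t)$ still moves in time rather than being fixed, but this does not matter because, as you note, skew-Hermiticity of the drift matrix holds for any $\bar U$.
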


\begin{proof}
We denote 
    $$
    Q = \lambda(\bar{U}(U^j)^\dagger - U^j \bar{U}^\dagger)dt+\sigma dW.
    $$ Notice that $Q$ is a skew-Hermitian matrix since $\bar{U}(U^j)^\dagger - U^j \bar{U}^\dagger$ and $dW$ are both skew-Hermitian. Namely, $Q^\dagger=-Q$. Then in  the Stratonovich sense, we have
    \begin{align*}
        d((U^j)^\dagger U^j) = &~(d(U^j)^\dagger) U^j+(U^j)^\dagger d(U^j) 
        = ((U^j)^\dagger Q^\dagger )U^j+(U^j)^\dagger (QU^j) \\[1mm]
        =&~ -(U^j)^\dagger QU^j+(U^j)^\dagger QU^j = 0.
    \end{align*}
It follows from $(U^j)^\dagger(0)U^j(0)=I_r$ that
$$
(U^j)^\dagger(t)U^j(t)=I_r
$$
for any $t\geq 0$.
\end{proof}

\begin{algorithm}\label{algoH}
\caption{Hermitian-preserving CBO method}
\textbf{Initialize:} Given the density matrix $\rho_{AB}$\\[2mm]
Set the parameters $\beta$, $\lambda$ and $\sigma$. Compute the rank of $\rho_{AB}$ and denote as $r$. \\
Set the dimension $M\geq r$, the number of particles $J$, the iteration times $K$ and the time step $\Delta t$. \\[1mm]
Prescribe the initial value by generating Hermitian matrix $H^{j,0}~(1\leq j\leq J)$ randomly.\\[2mm]
\textbf{Updating in time:} for $0\leq k\leq K$: 
$$
k \Delta t \rightarrow (k+1)\Delta t,\qquad H^{j,k}\rightarrow H^{j,k+1} $$
\textbf{Step 1:} Compute the entanglement $E_{AB}(U^{j,k})$ by (E1)--(E5) with $U^{j,k}$ the first $r$ columns of $\exp(iH^{j,k})$. Then compute the consensus by
$$
\bar{H}^k = \frac{1}{\sum_{j=1}^J\exp(-\beta E_{AB}(U^{j,k}))}\sum_{j=1}^JH^{j,k}\exp(-\beta E_{AB}(U^{j,k}))
$$

\textbf{Step 2:} Update $H^{j,k+1}$ by the Euler-Maruyama scheme:
$$
H^{j,k+1} = H^{j,k} - \lambda\Delta t (H^{j,k}-\bar{H}^k) + \sigma\sqrt{\Delta t}(H^{j,k}-\bar{H}^k)\odot Z^k
$$
where the $(m_1,m_2)$-th entry of $Z^k$ is given by
$$
Z^k_{m_1,m_2} = \left\{
\begin{array}{ll}
   Y_{m_1,m_2}, &\quad m_1=m_2 \\[3mm]
   Y_{m_1,m_2}+i Y_{m_2,m_1}, &\quad m_1<m_2\\[3mm]
   Y_{m_1,m_2}-i Y_{m_2,m_1}, &\quad m_1>m_2
\end{array}
\right.
$$
with $Y_{m_1,m_2}\sim \mathcal{N}(0,1)$.

\textbf{Output:} Take $\bar{U}^k$ as the first $r$ columns of $\exp(i\bar{H}^k)$. The numerical results are 
$$
\min_{1\leq k\leq K} E_{AB}(
\bar{U}^k).
$$
\end{algorithm}

At the discrete level, we consider the exponential integrator \cite{MR3891437} to preserve the continuous property:
$$
U^{j,k+1}= \exp\left(\lambda \Delta t ~\left(\bar{U}^k(U^{j,k})^\dagger - U^{j,k} (\bar{U}^k)^\dagger\right)  +\sigma \sqrt{\Delta t}~Z^k \right)  U^{j,k}
$$
with $Z^k$ being a skew-symmetric matrix, whose $(m_1,m_2)$-th entry is given by
\begin{equation}\label{skew-symm-dW}
Z^k_{m_1,m_2} = \left\{
\begin{array}{ll}
   iY_{m_1,m_2}, &\quad m_1=m_2 \\[3mm]
   Y_{m_1,m_2}+i Y_{m_2,m_1}, &\quad m_1<m_2\\[3mm]
   -Y_{m_1,m_2}+i Y_{m_2,m_1}, &\quad m_1>m_2.
\end{array}
\right.
\end{equation}
with $Y_{m_1,m_2}\sim \mathcal{N}(0,1)$. 
Notice that the matrix 
$$
\exp\left(\lambda \Delta t ~\left(\bar{U}^k(U^{j,k})^\dagger - U^{j,k} (\bar{U}^k)^\dagger\right)  +\sigma \sqrt{\Delta t}~Z^k \right)  
$$
is unitary. Then we know that $U^{j,k+1}$ is unitary provided that $U^{j,k}$ is unitary.

According to the spirit of CBO method, the dynamics are composed of two key components: a drift term and a stochastic term. The drift term captures the deterministic part of the evolution and is responsible for guiding the system toward agreement among particles. In the following, we establish a proposition showing that the drift term drives the dynamics toward consensus intuitively.
\begin{proposition}
Consider the deterministic dynamic
$$
\frac{dU}{dt}=\lambda(\bar{U}U^\dagger - U \bar{U}^\dagger)U,\qquad U(0)=U_0 
$$
where $\bar{U}\in\mathbb{C}^{M\times r}$ is a given constant matrix with $M\geq r$. Then there exists a constant $\delta$ such that for all $U_0$ satisfying 
$\|U_0-\bar{U}\|<\delta$, the solution $U=U(t)\in\mathbb{C}^{M\times r}$ satisfies
$$
\lim_{t\rightarrow \infty} U(t)=\bar{U}.
$$
\end{proposition}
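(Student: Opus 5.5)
The plan is to reduce the matrix dynamics to a closed $r\times r$ ODE for the overlap $A(t):=\bar U^\dagger U(t)$, and then show $A(t)\to I_r$ with a Lyapunov estimate. Throughout, $U_0$ and $\bar U$ lie on $V_r(\mathbb{C}^M)$, i.e. $U_0^\dagger U_0=\bar U^\dagger\bar U=I_r$, as in the setting of \eqref{CBO-unitary}.

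\textbf{Step 1 (simplify the vector field).} The deterministic flow is the special case $\sigma=0$ of \eqref{CBO-unitary}. The proof of the previous proposition shows that $U^\dagger U=I_r$ is preserved, since $U^\dagger U$ is conserved whenever $\dot U=QU$ with $Q$ skew-Hermitian. Using $U^\dagger U=I_r$, the equation becomes
$$
\frac{dU}{dt}=\lambda\bigl(\bar U-U\bar U^\dagger U\bigr).
$$
In particular $\bar U$ is an equilibrium.

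\textbf{Step 2 (closed equation for $A$).} Multiply the equation of Step 1 on the left by $\bar U^\dagger$ and use $\bar U^\dagger\bar U=I_r$. This gives the matrix Riccati equation
$$
\frac{dA}{dt}=\lambda\,(I_r-A^2),
$$
which involves only $A$. Because $U$ and $\bar U$ both have orthonormal columns, the distance to the target is controlled by $A$ alone:
$$
\|U-\bar U\|_F^2=2r-2\,\mathrm{Re}\,\mathrm{tr}(A)=-2\,\mathrm{Re}\,\mathrm{tr}(A-I_r)\le 2\sqrt r\,\|A-I_r\|_F .
$$
The initial closeness also transfers to $A$: $\|A(0)-I_r\|_F=\|\bar U^\dagger(U_0-\bar U)\|_F\le\|U_0-\bar U\|_F<\delta$. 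It therefore suffices to show that $A(t)\to I_r$ whenever $A(0)$ is close to $I_r$.

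\textbf{Step 3 (Lyapunov estimate, the main step).} Set $E=I_r-A$, so that $I_r-A^2=E(I_r+A)=E(2I_r-E)$ and hence
$$
\dot E=-\lambda\,E\,(2I_r-E).
$$
Take $V=\|E\|_F^2$. Then
$$
\dot V=-2\lambda\,\mathrm{Re}\,\mathrm{tr}\bigl(E^\dagger E(2I_r-E)\bigr)=-4\lambda V+2\lambda\,\mathrm{Re}\,\mathrm{tr}(E^\dagger E E)\le -4\lambda V+2\lambda V^{3/2},
$$
using $|\mathrm{tr}(E^\dagger EE)|\le\|E^\dagger E\|_F\|E\|_F\le\|E\|_F^3$. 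On the set $\{V\le1\}$ this gives $\dot V\le-2\lambda V$. So if $\delta\le1$, then $V$ is nonincreasing and never leaves $\{V\le1\}$. By Grönwall, $V(t)\le V(0)e^{-2\lambda t}$, so $A(t)\to I_r$ exponentially fast. Step 2 then yields $\|U(t)-\bar U\|_F^2\le 2\sqrt r\,\delta\,e^{-\lambda t}\to0$.

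I expect the only delicate point to be this nonlinear stability estimate: controlling the cubic term and checking that the trajectory cannot leave the region where the Lyapunov inequality holds. The bound above handles both, and it gives the explicit choice $\delta=1$, measured in the Frobenius norm. Global existence of the solution is automatic, because $\|U\|_F^2=\mathrm{tr}(U^\dagger U)=r$ stays bounded.
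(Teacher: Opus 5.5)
Your proof is correct. Its skeleton matches the paper's: both pass to the overlap $R=\bar U^\dagger U$ (your $A$), which satisfies the closed Riccati equation $\dot R=\lambda(I_r-R^2)$, show $R\to I_r$, and then lift back to $U$. The differences lie in the last two steps.

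\emph{Stability of $R=I_r$.} The paper linearizes, writing $\dot{\tilde R}=-2\lambda\tilde R+O(\tilde R^2)$, and cites stable-manifold theory. That gives local asymptotic stability but no explicit basin or rate. Your Lyapunov function $V=\|I_r-A\|_F^2$, with $\dot V\le-4\lambda V+2\lambda V^{3/2}$, proves the same fact from scratch. It also gives the explicit radius $\delta=1$ in the Frobenius norm and the rate $\|U(t)-\bar U\|_F^2\le 2\sqrt r\,\delta e^{-\lambda t}$.

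\emph{Lifting back to $U$.} The paper writes $\bar U^\dagger\lim_{t\to\infty}U(t)=I_r$ and concludes $\lim U(t)=\bar U$. This tacitly assumes that $\lim U(t)$ exists and that the limit again has orthonormal columns. Your identity $\|U-\bar U\|_F^2=2\,\mathrm{Re}\,\mathrm{tr}(I_r-A)$ makes convergence of $U$ follow directly from convergence of $A$, with no such assumption.

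\emph{Hypotheses.} You state explicitly that $U_0^\dagger U_0=\bar U^\dagger\bar U=I_r$, which the proposition leaves implicit. This hypothesis is genuinely necessary: $U^\dagger U$ is conserved along the flow, so $U(t)\to\bar U$ is impossible unless $U_0^\dagger U_0=I_r$. Both arguments also need $\lambda>0$. You use this implicitly and should state it.
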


\begin{proof}
Denote $R=\bar{U}^\dagger U\in\mathbb{C}^{r\times r}$. Using the orthogonal property of $\bar{U}$ and $U$, we derive that 
$$
\frac{dR}{dt} = \lambda(I_n-R^2). 
$$
For the case $\|U_0-\bar{U}\|<\delta$, we know that there exists a small constant $C_\delta$ such that 
$\|R(0)-I_r\|=\|\bar{U}^\dagger U_0-I_r\|<C_\delta$. Now we write the linearized equation near the equilibrium $R_{\infty}=I_r$:
$$
\frac{d\tilde{R}}{dt} = -2\lambda\tilde{R}+O(\tilde{R}^2)
$$
with $R=\tilde{R}+R_{\infty}$. Since $-2\lambda<0$, by the theory of stable manifold \cite{MR69338}, we know that 
$$
\lim_{t\rightarrow \infty} \tilde{R}(t)= 0 .
$$
Therefore, we know that 
$$
\bar{U}^\dagger \lim_{t\rightarrow \infty}U(t) = I_r,
$$
which implies $\lim_{t\rightarrow \infty}U(t) = \bar{U}$.
\end{proof}

On the other hand, the stochastic term adds randomness to the dynamics, in order to explore different states and preventing it from getting stuck in a non-optimal solution. 
To summarize the above, we propose
the Unitary-preserving CBO method (see Algorithm \ref{algo2}).

\begin{algorithm}\label{algo2}
\caption{Unitary-preserving CBO method}
\textbf{Initialize:} Given the density matrix $\rho_{AB}$\\[2mm]
Set the parameters $\lambda$ and $\sigma$. Compute the rank of $\rho_{AB}$ and denote as $r$. \\
Set the dimension $M\geq r$, the number of particles $J$, the iteration times $K$ and the time step $\Delta t$. \\[1mm]
Prescribe the initial value by generating Hermitian matrix $H^{j,0}~(1\leq j\leq J)$ randomly. Then take $U^{j,0}$ as the first $r$-columns of $\exp(iH^{j,0})$. \\[2mm]
\textbf{Updating in time:} for $0\leq k\leq K$: 
$$
k \Delta t \rightarrow (k+1)\Delta t,\qquad U^{j,k}\rightarrow U^{j,k+1} 
$$
\textbf{Step 1:} Compute the entanglement $E_{AB}(U^{j,k})$ by (E1)--(E5). Then compute the consensus by
$$
\bar{U}^k = \text{argmin}_jE_{AB}(U^{j,k}).
$$

\textbf{Step 2:} Update $U^{j,k+1}$ by the exponential integrator 
$$
U^{j,k+1}= \exp\left(\lambda \Delta t ~\left(\bar{U}^k(U^{j,k})^\dagger - U^{j,k} (\bar{U}^k)^\dagger\right)  +\sigma \sqrt{\Delta t}~Z^k \right)  U^{j,k}.
$$

\textbf{Output:} The numerical results are 
$$
\min_{1\leq k\leq K} E_{AB}(
\bar{U}^k).
$$
\end{algorithm}

\begin{remark}\label{remark-projU}
In order to guarantee the orthogonal property, one can also use the projection method directly to the standard CBO system. 
By using the Euler-Maruyama scheme, the idea is implemented as
\begin{equation*}
\left\{
\begin{aligned}
U^{j,k+1/2} & = U^{j,k} - \lambda\Delta t (U^{j,k}-\bar{U}^k) + \sigma\sqrt{\Delta t}(U^{j,k}-\bar{U}^k)\odot Z^k,\\[2mm]
U^{j,k+1}~&\leftarrow~~\text{Gram-Schmidt process of}~ U^{j,k+1/2}.
\end{aligned}
\right.    
\end{equation*}
In this case,  the $(m_1,m_2)$-th entry of $Z^k$ is given by $Z^k_{m_1,m_2}\sim \mathcal{N}(0,1)$ for $1\leq m_1\leq M$ and $1\leq m_2\leq r$. Besides, $\bar{U}^k$ can also be constructed as the weighted sum:
$$
\bar{U}^k = \frac{\sum_{j=1}^J \omega_j U^{j,k}}{\sum_{j=1}^J \omega_j}, 
\qquad \omega_j = \exp(-\beta E_{AB}(U^{j,k})).
$$
Then we exploit the Gram-Schmidt process for $\bar{U}^{k}$. 

However, numerical experiments suggest that the projection method underperforms compared to the structure-preserving method (see Section \ref{sec:5}). A possible drawback is that the trajectory of the original CBO system may deviate substantially from the manifold, resulting in large projection steps, which can in turn distort the dynamics and lead to stability and consistency issues.
\end{remark}

\section{Multi-species CBO method}\label{sec:4}

For the computation of the EoF, the optimization problem should be solved for matrices of each order $M$, with $r \leq M \leq N^2$.
In the previous section, we developed Algorithms~3.1 and~\ref{algo2} to solve the problems independently for various $M$.

In this section, we explore the possibility to allow information exchange across different values of $M$, thereby enabling interaction between these problems. As a result, each system corresponding to a given $M$ can benefit from additional information, enhancing the ability of the CBO method to find the optimal solution. For simplicity, in the present work we use this coupling to the Hermitian-preserving framework introduced in Section~\ref{sec3.1}. 


Recall the consensus computation for the $M_1 \times M_1$ CBO system in Algorithm~3.1: 
$$
\bar{H}_{M_1} = \frac{1}{\sum_{j=1}^J\exp(-\beta E_{AB}(U^j_{M_1}))}\sum_{j=1}^JH^j_{M_1}\exp(-\beta E_{AB}(U^j_{M_1})).
$$
Here $\beta > 0$, and $U^j_{M_1}$ denotes the first $r$ columns of $\exp(i H^j_{M_1})$. The subscript $M_1$ emphasizes the dependence on the dimension. Similarly, one can compute the consensus for the $M_2 \times M_2$ CBO system. In the current formulation, there is no information exchange between the $M_1 \times M_1$ and $M_2 \times M_2$ systems. However, following the spirit of the CBO, if the objective value $E_{AB}(U^j_{M_2})$ is relatively small, then the corresponding $H^j_{M_2}$ should provide useful guidance not only for other particles of dimension $M_2 \times M_2$, but also for particles across all dimensions. We illustrate this idea in Fig \ref{fig:1}.

\begin{figure}[H]
    \centering
\begin{tikzpicture}[
    node distance=1.1cm and 1.0cm,
    every node/.style={font=\small}
]

\usetikzlibrary{positioning}

\node (M1) {$M=M_1$};
\node[right=3cm of M1] (M2) {$M=M_2$};

\node[below=0.5cm of M1] (P1) {$\{H^j_{M_1}\}$};
\node[below=0.5cm of M2] (P2) {$\{H^j_{M_2}\}$};

\node[below=of P1] (S1) {$\bar{H}_{M_1}$};
\node[below=of P2] (S2) {$\bar{H}_{M_2}$};

\draw (P1) -- (S1);
\draw (P2) -- (S2);

\draw[->, dashed] (P1) -- (S1);
\draw[->, dashed] (P2) -- (S2);

\draw[->, dashed] (P1) to[bend left=20] (S2);
\draw[->, dashed] (P2) to[bend right=20] (S1);

\end{tikzpicture}

    \caption{Consensus computation from multi-level particles}
    \label{fig:1}
\end{figure}

Specifically, to implement this idea, the first task is to address the summation of matrices $H^{j_1}_{M_1}$ and $H^{j_2}_{M_2}$ when they have different dimensions. Clearly, this requires truncating the larger matrix and augmenting the smaller one accordingly.
To achieve this, we state a simple fact about the problem of the entanglement computing.  
\begin{proposition}
Let $M_1 \in \mathbb{N}$ and let $U_1 \in \mathbb{C}^{M_1 \times r}$ satisfy $U_1^\dagger U_1 = I_r$. Denote its associated entanglement by $E_{AB}(U_1)$. Then, for any $M_2 \geq M_1$, there exists a matrix $U_2 \in \mathbb{C}^{M_2 \times r}$ such that $U_2^\dagger U_2 = I_r$ and
$$
E_{AB}(U_2) = E_{AB}(U_1).
$$
\end{proposition}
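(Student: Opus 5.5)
The plan is to construct $U_2$ explicitly by zero-padding. Given $U_1\in\mathbb{C}^{M_1\times r}$ with $U_1^\dagger U_1=I_r$, I define
$$
U_2=\begin{pmatrix} U_1\\ 0_{(M_2-M_1)\times r}\end{pmatrix}\in\mathbb{C}^{M_2\times r}.
$$
The constraint is immediate: $U_2^\dagger U_2 = U_1^\dagger U_1 + 0^\dagger 0 = I_r$. The remainder of the argument is to track the steps (E1)--(E5) and check that the extra rows contribute nothing to $E_{AB}$.

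For step (E1), $W_2=\Psi P^{1/2}U_2^\dagger = \bigl(\Psi P^{1/2}U_1^\dagger,\ 0_{N\times(M_2-M_1)}\bigr)=(W_1,\ 0)$. So the first $M_1$ columns of $W_2$ coincide with the columns $|w_m\rangle$ of $W_1$, and the remaining columns are zero. In step (E2), the weights $\sigma_m$ and the states $|\phi_m\rangle$ therefore agree for $m\le M_1$, while $\sigma_m=0$ for $m>M_1$. Steps (E3)--(E4) then give the same $E(|\phi_m\rangle)$ for $m\le M_1$. In (E5), the extra terms are $\sigma_m E(|\phi_m\rangle)$ with $\sigma_m=0$, so $E_{AB}(U_2)=E_{AB}(U_1)$.

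The only delicate point is that for $m>M_1$ the normalized state $|\phi_m\rangle=|w_m\rangle/\sqrt{\sigma_m}$ is undefined, since $\sigma_m=0$. This degeneracy can already occur in the original definition whenever some column of $W$ vanishes. I would resolve it by adopting the natural convention that terms with $\sigma_m=0$ are omitted from the sum in (E5), or equivalently that such a term is assigned an arbitrary unit vector. Either choice is legitimate because the von Neumann entropy of a pure state satisfies $0\le E(|\phi\rangle)\le \log N_A<\infty$, so $\sigma_m E(|\phi_m\rangle)=0$ regardless of which unit vector is used. The pure-state decomposition \eqref{W-phi} is likewise unaffected, since zero-weight terms add nothing to $\rho_{AB}$.

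If one prefers to avoid zero weights altogether, a fallback is to split a nonzero column: replace $|w_1\rangle$ by $M_2-M_1+1$ copies of $|w_1\rangle/\sqrt{M_2-M_1+1}$. This is realized by $U_2=VU_1$, where $V\in\mathbb{C}^{M_2\times M_1}$ is an isometry, $V^\dagger V=I_{M_1}$, that maps the first basis vector to the uniform superposition over $M_2-M_1+1$ coordinates. Each copy has the same normalized state $|\phi_1\rangle$, and their weights sum to $\sigma_1$, so the entanglement is again unchanged. The padding construction is simpler, however, and I would present it as the main proof.
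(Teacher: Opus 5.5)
Your proof is correct and takes the same route as the paper: zero-pad $U_1$ with $M_2-M_1$ rows of zeros, so that $W_2=(W_1,\,0)$ and the extra columns contribute nothing in (E5). Your explicit handling of the $\sigma_m=0$ terms, where $|\phi_m\rangle$ is undefined, fills in what the paper dismisses as ``not difficult to see''; the isometry-based column-splitting fallback is also valid but not needed.
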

\begin{proof}
    According to the computation in (E1)--(E5), it suffices to give 
    $$
    U_2 = \begin{pmatrix}
        U_1\\
        0
    \end{pmatrix}.
    $$
    Indeed, by the formula \eqref{W-U}, we denote $W_1=\Psi P^{1/2} U_1^{\dagger} \in\mathbb{C}^{r\times M_1}$ and write
    $$
    W_1 = (w_1,w_2,\dots,w_{M_1}).
    $$
    It follows from \eqref{W-U} that $W_2\in \mathbb{C}^{r\times M_2}$ and 
    $$
    W_2=\Psi P^{1/2} U_2^{\dagger} = \Psi P^{1/2} (U_1^{\dagger},~0) = (w_1,w_2,\dots,w_{M_1},0,\dots,0).
    $$
    Then it is not difficult to see that $E_{AB}(U_1)=E_{AB}(U_2)$. 
\end{proof}

Motivated by this fact, for $M_1<M_2$ we define 
$$
\mathcal{T}_{M_1\rightarrow M_2}(H^{j}_{M_1}) := H^{j}_{M_1\rightarrow M_2} = \begin{pmatrix}
{H^{j}_{M_1}} & 0_{M_1 \times (M_2-M_1)} \\[2mm]
0_{(M_2-M_1) \times M_1} & 0_{(M_2-M_1) \times (M_2-M_1)}
\end{pmatrix}.
$$
Then we compute 
$$
\exp(iH^{j}_{M_1\rightarrow M_2}) = \begin{pmatrix}
\exp(iH^{j}_{M_1}) & 0 \\[2mm]
0 & I_{M_2-M_1}
\end{pmatrix}
$$
and its first $r$ columns are given by
$$
U^{j}_{M_1\rightarrow M_2} = \begin{pmatrix}
    U^{j}_{M_1}\\[1mm]
    0
\end{pmatrix}.
$$
We know that $E_{AB}(U^{j}_{M_1\rightarrow M_2})=E_{AB}(U^{j}_{M_1})$. In other words, the smaller matrix $H^{j}_{M_1}$ is embedded into the higher-dimensional space by zero padding.

On the other hand, there exist different methods to truncate a larger matrix to a smaller one. However, it is generally nontrivial to preserve the same entanglement after truncation.
In this work, we simply define $\mathcal{T}_{M_2\rightarrow M_1}(H^{j}_{M_2})$ by the following steps:
\begin{itemize}
\item[(T1)] Write \(H^{j}_{M_2} = (h_1, h_2, \dots, h_{M_2})\) in terms of its column vectors. Compute the Euclidean norm of each column \(\|h_m\|\), and select the indices \(m_1, m_2, \dots, m_{M_1}\) corresponding to the \(M_1\) largest norms.\\

\item[(T2)] Construct $\mathcal{T}_{M_2 \rightarrow M_1}(H^{j}_{M_2})$ by extracting the corresponding submatrix, whose $(p,q)$-th entry is given by $(H^{j}_{M_2})_{m_p, m_q}$.
\end{itemize}
~\\[2mm]
In this way, it is not difficult to check that 
$$
\mathcal{T}_{M_2\rightarrow M_1}(\mathcal{T}_{M_1\rightarrow M_2}(H^{j}_{M_1})) = H^{j}_{M_1}.
$$
Having the operators $\mathcal{T}_{M_2\rightarrow M_1}$ and $\mathcal{T}_{M_1\rightarrow M_2}$, we define the consensus for the $M\times M$ dimensional problem by
$$
\bar{H}_{M} = \frac{\sum_{m=r}^{N^2}\sum_{j=1}^J\omega_{jm}\mathcal{T}_{m\rightarrow M}(H^j_{m})}{\sum_{m=r}^{N^2}\sum_{j=1}^J \omega_{jm}} ,\qquad \omega_{jm}=\exp(-\beta E_{AB}(U^j_{m})).
$$
Here $U^j_{m}$ is an $M\times r$ matrix consists of the first $r$ columns of 
$\exp(i\mathcal{T}_{m\rightarrow M}(H^j_{m}))$.

~

We end this section with the following concluding remarks. 
\begin{remark}
We emphasize that the only modification with respect to the CBO method in Algorithm 3.1 lies in the definition of the consensus; all other steps remain unchanged. This method can be also  regarded as a CBO system on the extended phase space $(H,M)$ with no dynamics on $M$. 
\end{remark}
\begin{remark}
The proposed method is more general in nature. In this work, we only present one specific implementation, while there remain several freedoms. For example, one may allow particles to evolve across varying dimensions as the algorithm progresses; that is, $M$ is treated as a dynamical variable in the extended phase space. This would require additional rules for adapting the dimension according to the entanglement structure. Such an extension could further enhance the flexibility of the method and potentially improve its ability to explore the solution space.    
\end{remark}

\section{Numerical result}\label{sec:5}
In this section, we explore several benchmark problems in the computation of quantum entanglement to show the validity of our algorithms. 

\subsection{Numerical experiments for structure-preserving CBO methods}
In all the following examples, the CBO parameters are chosen as follows: inverse temperature $\beta=200$, drift rate $\lambda=1$, time step $\Delta t=0.2$, and maximum iteration number $K=1000$. Since the problem is highly non-convex, in order to better explore the landscape of the objective function, we add some addictive noise $\sigma \delta dW$ with $\delta = 1$. This comes in especially when the particles are about to reach a not optimal consensus. The convergence of this modified model has been investigated in \cite{MR4947952,huang2026faithfulglobalconvergencerescaled}.
Moreover, the initial data are generated by the Gaussian unitary ensemble (GUE) method \cite{Zyczkowski99,MR1628620}.

Though theoretical results \cite{Horodecki97} indicate that the parameter $M$ should lie in the range $[r,N^2]$. However, it has been shown in \cite{Zyczkowski99,RyuCaiCaro08} that only limited improvement in the optimal value is observed for large $M$, near its maximal value $N^2$. Thus, in practice we only compute the problem for $M\in [r, 2N]$.

\begin{exmp}[The Horodecki states $(2\times 2)$ \cite{Horodecki96}]
We consider the parameterized $2\times 2$ Horodecki state, whose density matrix $\rho_{AB}$ is given by
\begin{equation}
\rho_{AB}=q|\Psi_1 \rangle \langle \Psi_1|+(1-q)|\Psi_2 \rangle \langle \Psi_2|,
	\label{eq: Horodecki}
\end{equation}
where the parameters $0<q<1$, the pure states are given by
$$|\Psi_1 \rangle =a|00\rangle+\sqrt{1-a^2}|11\rangle,\quad |\Psi_2 \rangle =a|10\rangle+\sqrt{1-a^2}|01\rangle
$$
with $0<a<1$. In the following, we take $a = 3/4$. In this case, the rank for the density matrix $\rho_{AB}$ equals to $2$. Meanwhile, we have $N=N_A\times N_B=2\times 2 = 4$. 
\end{exmp}
For these 2-qubits states, the entanglement of formation is analytically given by the Wootters' formula \cite{HillWootters97,Wootters98}, resorting to concurrence $C(\rho_{AB})$.
$$
\mathrm{EoF}(\rho_{AB})=\mathcal{E}(C(\rho_{AB})),
$$
where 
$$
\mathcal{E}(x)=H\left(\frac{1}{2}+\frac{1}{2}\sqrt{1-x^2}\right), \quad x\in[0,1].
$$
Here $H(x)$ is the binary entropy defined by
$$
H(x)=-x\log_2 x - (1-x) \log_2 (1-x).
$$
Particularly, for the $2\times 2$ state, the concurrence is given by \cite{Wootters98} 
$$
C(\rho_{AB})=\max\{0,\lambda_1-\lambda_2-\lambda_3-\lambda_4\},
$$
where $\lambda_1\geq \lambda_2 \geq \lambda_3\geq \lambda_4$ are the eigenvalues of the Hermitian matrix 
$$
R=\sqrt{\sqrt{\rho_{AB}}~\tilde{\rho}_{AB}~\sqrt{\rho_{AB}}},\qquad \tilde{\rho}=(\sigma_y\otimes \sigma_y)\rho^*_{AB}(\sigma_y\otimes \sigma_y)
$$ 
with $\sigma_y$ being the Pauli Y matrix and $\rho^*_{AB}$ the complex conjugate of ${\rho}_{AB}$.

\begin{figure}
    \centering
    \includegraphics[width=0.8\linewidth]{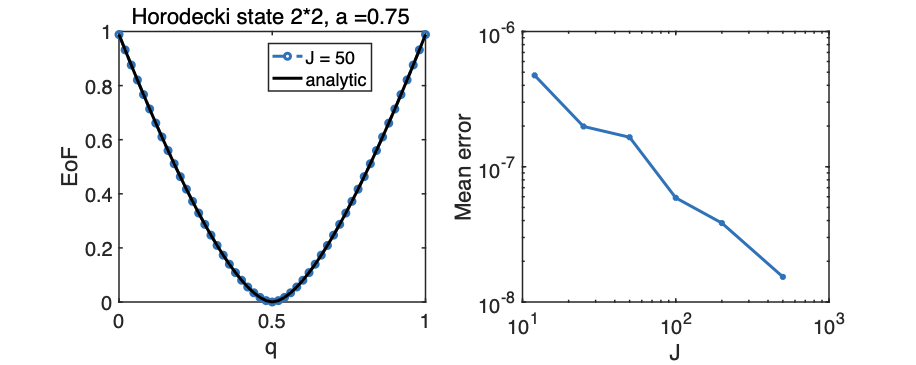}
    \caption{Left: EoF when $J = 50$, Right: Mean error for Horodecki states \eqref{eq: Horodecki} using Hermitian preserving CBO method w.r.t. particle number $J$.}
    \label{fig:conv_j}
\end{figure}

\begin{figure}
    \centering
    \includegraphics[width=0.4\linewidth]{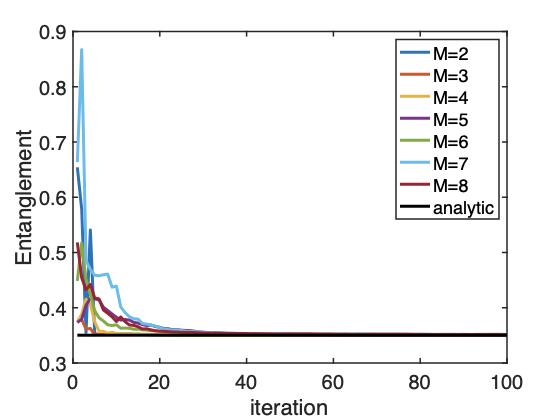}
    \caption{A trajectory of entanglement at consensus for different $M$ during iteration for $q=1/4$.}
    \label{fig:trajectory}
\end{figure}

 
To demonstrate the validity of the proposed CBO methods, we first compute the entanglement of formation (EoF) for different values of $q$ using the Hermitian-preserving CBO method. We verify that the computed $H^j$ preserves the Hermitian property up to machine precision.

The numerical results are shown in Fig. \ref{fig:conv_j}–\ref{fig:trajectory}. From Fig. \ref{fig:conv_j} (left), we observe that the numerical solutions agree well with the analytical solution. In Fig. \ref{fig:conv_j} (right), 
We compute the mean error of the EoF for each $J$, where $J$ denotes the number of particles. The error is averaged over all values of q. As the number of particles increases, the error decreases, indicating convergence of the $J$-particle system toward the mean-field limit. Moreover, we fix the parameter $q=1/4$ and plot the evolution of entanglement with respect to time in Fig. \ref{fig:trajectory}. We observe that the entanglement converges for all values of $M$, which implies that the CBO system reaches consensus as time $t$ becomes large.

Next, we compare the performances for structure-preserving CBO methods and the projection methods. 
The results are plotted in Fig. \ref{fig:horo3} with $J= 500$. In the left figure, we compare the performances for the Hermitian-preserving CBO method in Algorithm 3.1 and the projection method in Remark \ref{remark-projH}. In these two methods, we set the noise level to be the same $\sigma = 0.06$. On the other hand, the error for Unitary-preserving CBO and the projection method in Remark \ref{remark-projU} are compared in the right figure. Here the noise level is given as $\sigma = 0.01$. From these two figures, we observe that the structure preserving methods perform better than the projecting methods for both Hermitian and Unitary cases.

\begin{figure}
    \centering
\includegraphics[width=0.7\linewidth]{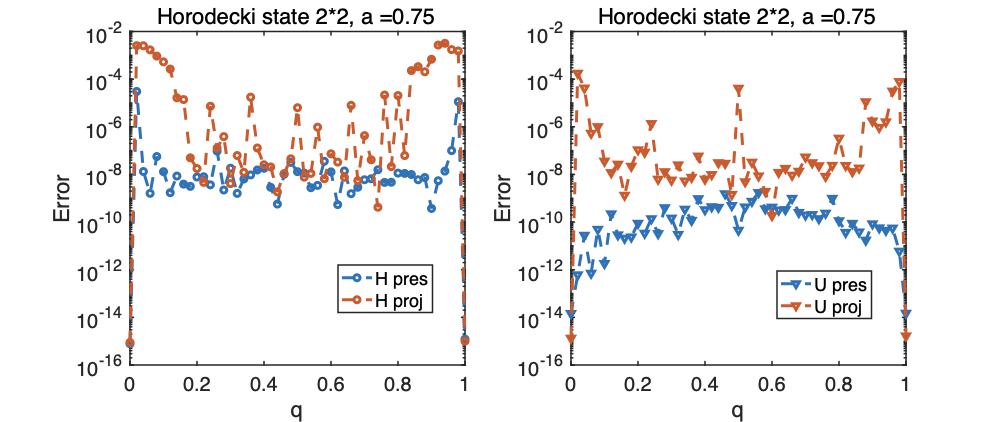}
    \caption{Comparison of structure preserving and projection CBO methods for Horodecki states in Hermitian (left) and Unitary (right) cases.}
    \label{fig:horo3}
\end{figure}

\begin{exmp}[The Werner states $(2\times 2)$ \cite{BennettDiVincenzoSmolinWootters96}]
Consider the parameterized $2\times 2$ Werner state. The density matrix $\rho_{AB}$ is given by 
\begin{equation}
	\rho_{AB}=F|\psi^-\rangle\langle\psi^-|+\frac{1-F}{3}\left(|\psi^+\rangle\langle\psi^+|+|\phi^-\rangle\langle\phi^-|+|\phi^+\rangle\langle\phi^+|\right)
	\label{eq: Werner}
	\end{equation}	
    where the fidelity $F\in [0.5,1)$.
	Here 
	$$
	|\phi^{\pm}\rangle=\frac{1}{\sqrt{2}}\left(|00\rangle\pm|11\rangle\right),\quad |\psi^{\pm}\rangle=\frac{1}{\sqrt{2}}\left(|01\rangle\pm|10\rangle\right)
	$$
    are the Bell's basis.
\end{exmp}
Similar to the former example, the analytic solution can be computed by the Wootters' formula. We compute the EoF of the Werner states for various $F$. The setting for the parameters are given as follows: the number of particles is $J = 500$, the noise levels are given by $\sigma = 0.01$ for Unitary preserving CBO method and $\sigma = 0.06$ for the other CBO methods.

The numerical results are shown in Fig. \ref{fig:werner}.
From the left figure, we observe that all CBO algorithms produce numerical solutions that match well with the analytical solution.
By comparing the error between the numerical solutions and the analytical solution, we conclude from the right figure that the structure preserving methods (Hermitian-preserving or unitary-preserving) are better than the corresponding projection methods in computing the EoF of Werner states.
\begin{figure}
    \centering
    \includegraphics[width=0.8\linewidth]{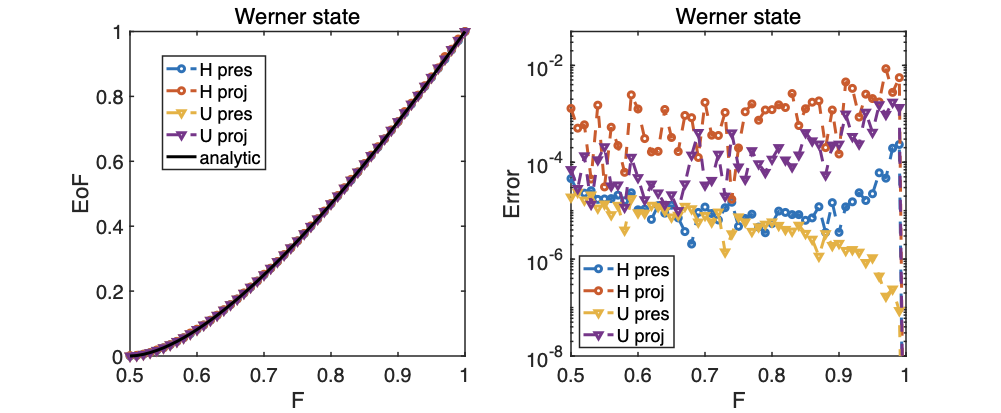}
    \caption{Comparison of different CBO methods for Werner states \eqref{eq: Werner}. Left: EoF of Werner states, Right: error between the numerical solutions and the analytical solution.}
    \label{fig:werner}
\end{figure}

\begin{exmp}[The Isotropic states $(3\times 3)$ \cite{TerhalVollbrecht00}]
	Consider the parameterized $3\times 3$ isotropic states, the density matrix is 
\begin{equation}
	\rho_{AB}=\frac{1-F}{8}(1-|\Psi^+\rangle\langle \Psi^+|)+F|\Psi^+\rangle\langle \Psi^+|,
\label{eq: isotropic}
\end{equation}
	with the maximally entangled state $$|\Psi^+\rangle=\frac{1}{\sqrt{3}}\sum_{i=1}^3|ii\rangle, \quad \left(E(|\Psi^+\rangle)=\log_23\right)$$
	where the fidelity $F\in (0,1)$.
\end{exmp}
The analytic solution is given in \cite{TerhalVollbrecht00},
$$
\mathrm{EoF}(\rho_{AB})=\left\{
\begin{aligned}
& 3(F-1)+\log_23, \quad &F>8/9,\\[2mm]
	&H(\gamma(F))+1-\gamma(F), \quad &1/3<F\leq 8/9, \\[2mm]
	&0,\qquad &F\leq 1/3,
\end{aligned}\right.
$$
where $H$ is the binary entropy and
$$
\gamma(F)=\frac{1}{3}\left(\sqrt{F}+\sqrt{2(1-F)}\right)^2.
$$

For bipartite qutrits state, $N=N_A\times N_B=3\times 3 = 9$. Also, we compute that the density matrix $\rho_{AB}$ has rank $r=9$. Hence the dimension of the problem ranges from $r\times r = 81$ to $N^2 \times N^2 = 6561$. Even in the simplified case, we have the upper bound $2N\times 2N = 324$. Therefore, the optimization problem is very high dimensional. In this example, we take the parameters in the CBO as those in the previous case. The results are exhibited in Fig. \ref{fig:isotropic}. Even for this high-dimensional problem, we find that the CBO methods produce reasonable numerical results, although the errors in EoF are larger than in the previous, simpler example. From the right figure, we still observe that the structure-preserving CBO methods achieve better performance than the projection methods.

\begin{figure}
    \centering
    \includegraphics[width=0.8\linewidth]{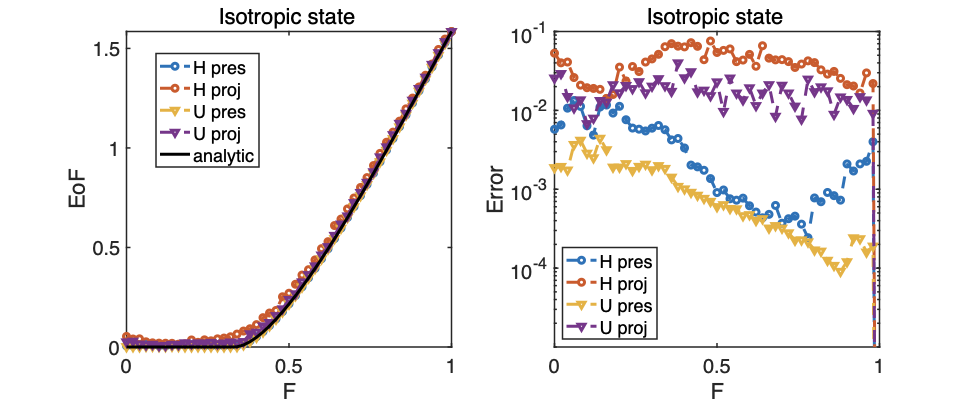}
    \caption{Comparison of different CBO methods for Isotropic states \eqref{eq: isotropic}. Left: EoF of Isotropic states, Right: error between the numerical solutions and the analytical solution.}
    \label{fig:isotropic}
\end{figure}

\begin{exmp}[The Horodecki states $(2\times 4)$ \cite{Horodecki97}]
Consider the parameterized $2\times 4$ Horodecki state, the density matrix is 
\begin{equation}
	\rho_{AB}=\frac{1}{1+7b}
\begin{pmatrix}
A & B\\[1mm]
B^T & C
\end{pmatrix}, \quad b\in[0,1],
\label{eq: Horodeckis}
\end{equation}
where $$
A=bI_4,\quad
B=b\begin{pmatrix}
	0 & 1 & 0 & 0 \\
	0 & 0 & 1 & 0 \\
	0 & 0 & 0 & 1\\
	0 & 0 & 0 & 0
\end{pmatrix},\quad 
C=\begin{pmatrix}
	\frac{1+b}{2} & ~0~ & ~0~ & \frac{\sqrt{1-b^2}}{2} \\[1mm]
	0 & ~b~ & ~0~ & 0 \\[1mm]
	0 & ~0~ & ~b~ & 0\\[1mm]
	\frac{\sqrt{1-b^2}}{2} & ~0~ & 0 & \frac{1+b}{2}
\end{pmatrix}.
$$
\end{exmp}
In this example, $b$ is a parameter. 
When $b=0$ or $b=1$, $\rho_{AB}$ is separable PPT state. For $b\in(0,1)$, $\rho_{AB}$ has rank $5$ and is an entangled PPT state. 

This is another example of a high-dimensional problem. We have $N = N_A\times N_B= 2\times 4 = 8$ and $r=5$.
Therefore, the dimension of the matrices involved in the computation ranges from $r\times r=25$ to $2N\times 2N=256$ in practice. Since no analytical solution is available for these types of states, we use the results obtained via the simulated annealing (SA) method introduced in \cite{Zyczkowski99} as a reference. For this rejection sampling method, we take the setting of parameters in \cite{Zyczkowski99}: the initial angle and the final of angle are chosen as $\chi_0 = 0.3, \chi_{end} = 10^{-4}$; the angle reduction coefficient is $\alpha = \frac{2}{3}$. In order to do a fine search, the number of iterations with angle fixed $I_{change} = 1000$ and the number of realization is $I_{mat}= 20$.

In Fig. \ref{fig: Horodeckis}, we present the numerical results computed using our CBO methods with the same parameter settings as before. In this example, we observe that the numerical solutions obtained by the projection methods deviate significantly from the reference solution, whereas those computed by the structure-preserving methods remain close to the reference solution. This is also reflected in the error plot (right figure), which illustrates the deviation of the numerical solutions from the reference solution.
\begin{figure}
    \centering
\includegraphics[width=0.9\linewidth]{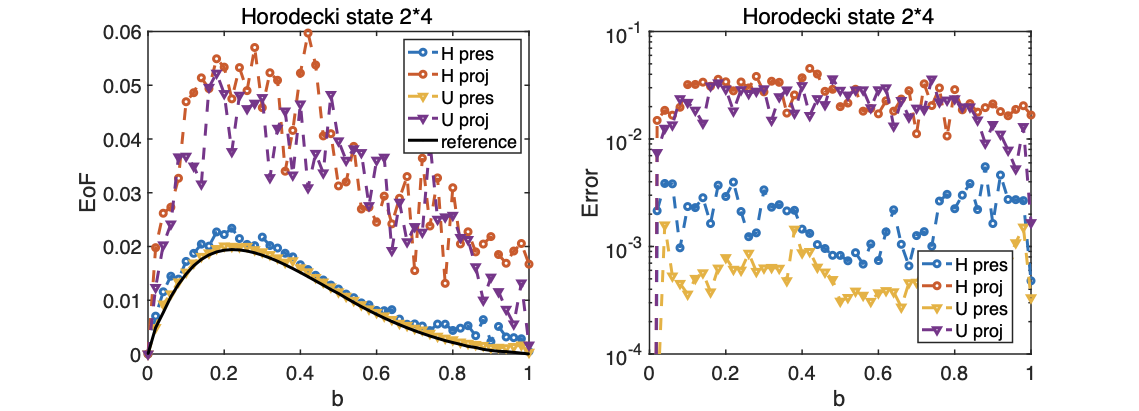}
    \caption{Comparison of different CBO methods for $2\times 4$ Horodecki states \eqref{eq: Horodeckis}. Left: EoF of $2\times 4$ Horodecki states, Right: error between the numerical solutions and the reference solution.}
    \label{fig: Horodeckis}
\end{figure}

\subsection{Numerical experiments for the Multi-species CBO method}

In this subsection, we demonstrate both the validity and the flexibility of the Multi-species CBO method by considering how particles from different species (with varying dimensions) communicate when computing the consensus.


We first consider the Werner state \eqref{eq: Werner} with fidelity $F = 0.7$. Take the computing parameters as $J = 100, K = 500, \beta = 200, \lambda = 1, \sigma = 0.06, \Delta t = 0.2$. The trajectories over time of the Hermitian-preserving CBO method and the Multi-species CBO method are shown in Fig. \ref{fig:compare_werner}. From the right figure, we observe that trajectories for different values of $M$ overlap, which implies the contribution of information exchange. 
We also observe that the consensus converges faster in the Multi-species case. Additionally, the Multi-species CBO tends to converge to the same consensus across different values of $M$, whereas the standard CBO does not (see Table \ref{tab:1}). The Multi-species method are able to attains a lower minimum (see Fig. \ref{fig:conv_werner} for the error comparison).
\begin{table}[h]\label{tab:1}
    \centering
     \caption{Minimum consensus value for different $M$ for Werner state with $F=0.7$ when $J = 100$. The real EoF for this state is $0.25022$.}
    \begin{tabular}{|c|ccccc|}\hline
        minimum & $M=4$& $M=5$& $M=6$& $M=7$ &$M=8$    \\ \hline
        H pres. CBO & 0.25026& 0.25031& 0.25052 & 0.25067 & 0.25087\\
        Multi-species H pres. CBO & 0.25023 & 0.25024 & 0.25025 & 0.25025& 0.25025\\\hline
    \end{tabular}
\end{table}

\begin{figure}
    \centering
    \includegraphics[width=0.8\linewidth]{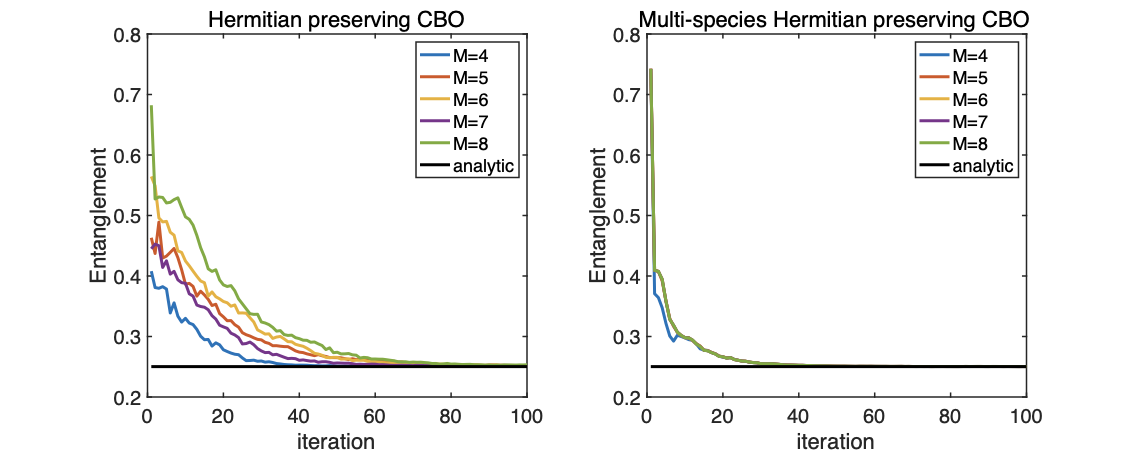}
    \caption{Trajectory of entanglement at consensus for different $M$ during iteration for Werner state with $F=0.7$ when $J = 100$.}
    \label{fig:compare_werner}
\end{figure}

\begin{figure}
    \centering
    \includegraphics[width=0.4\linewidth]{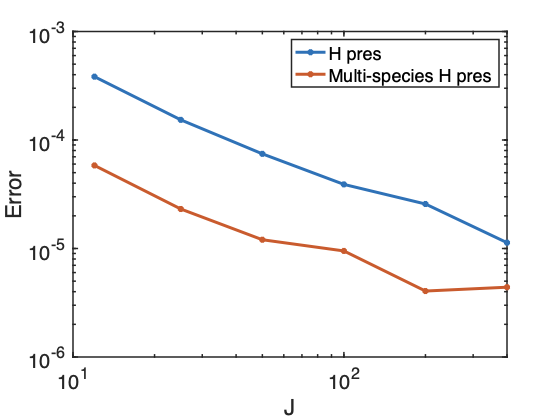}
    \caption{Errors of Hermitian preserving CBO method and the Multi-species CBO method for Werner states \eqref{eq: Werner} with $F = 0.7$. Here $J$ is the number of particles for each $M$.}
    \label{fig:conv_werner}
\end{figure}

Next, we compare the performance of the Hermitian-preserving CBO method and the Multi-species CBO method across all fidelity values $F$. The results are presented in Fig. \ref{fig:werner_multi} for Werner states \eqref{eq: Werner} and in Fig. \ref{fig:isotropic_multi} for isotropic states \eqref{eq: isotropic}.
From both figures, we observe that for most fidelity $F$, the Multi-species CBO is more accurate than the original version. So the CBO method do benefit from the communication across dimensions.
\begin{figure}[h]
    \centering
    \includegraphics[width=0.8\linewidth]{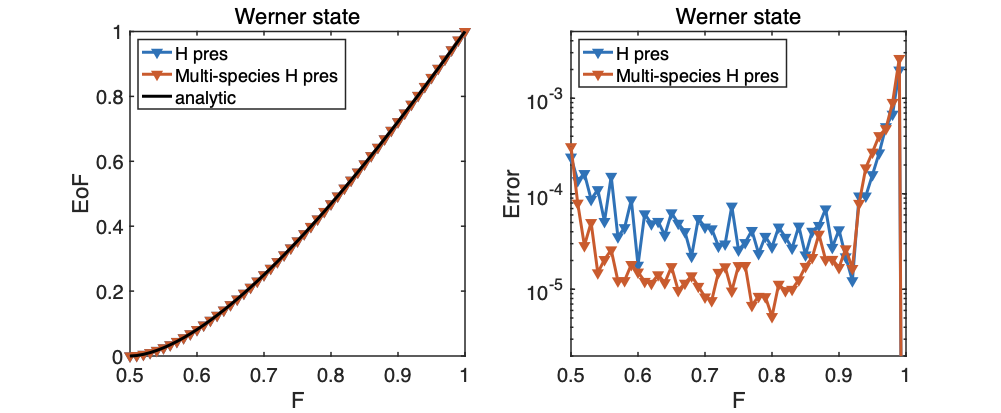}
    \caption{Comparison of Hermitian preserving CBO method with the Multi-species CBO method for Werner states \eqref{eq: Werner} when $J = 100$. Left: EoF of Werner states, Right: error between the numerical solutions and the analytical solution.}
    \label{fig:werner_multi}
\end{figure}

\begin{figure}[h]
    \centering
    \includegraphics[width=0.8\linewidth]{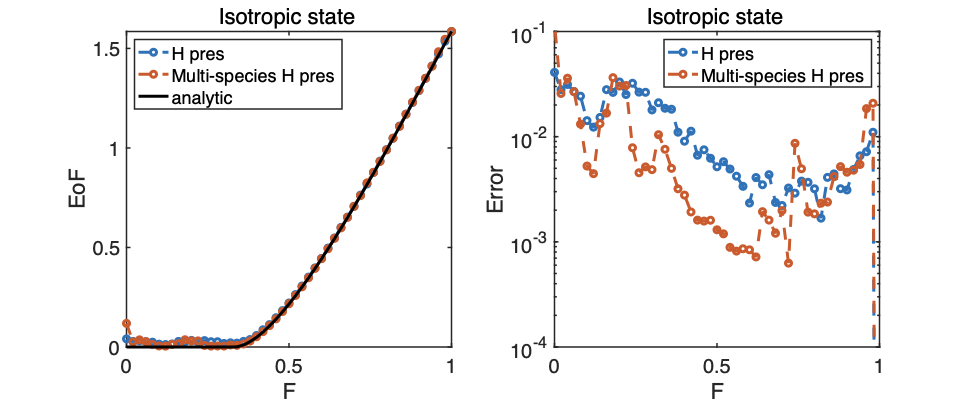}
    \caption{Comparison of Hermitian preserving CBO method with the Multi-species CBO method for Isotropic states \eqref{eq: isotropic} when $J = 100$. Left: EoF of Isotropic states, Right: error between the numerical solutions and the analytical solution.}
    \label{fig:isotropic_multi}
\end{figure}

\section*{Acknowledgments}
The authors would like to thank Prof. Zhihao Ma and Prof. Shi Jin for introducing the topic and helpful discussions.


\bibliographystyle{siamplain}
\bibliography{references}
\end{document}